\documentclass[10pt, doublecolumn]{IEEEtran}
\pdfoutput=1

\usepackage{paralist}
\usepackage[usenames,dvipsnames]{xcolor}
\usepackage{amsthm,amsmath,amsfonts,ed}
\usepackage{xspace}
\usepackage{url}
\usepackage{multirow}
\usepackage{footnote}
\usepackage{subfigure}
\usepackage{amssymb,amsmath,mathtools,fixmath}
\usepackage{algorithmicx,algorithm}
\usepackage{algpseudocode} 
\usepackage{cases}
\usepackage{cite}
 
\newcommand{\x}{\mathbold{x}}
\newcommand{\w}{\mathbold{w}}

\newcommand{\y}{\mathbold{y}}
\newcommand{\bphi}{\mathbold{\phi}}
\newcommand{\bPhi}{\mathbold{\Phi}}
\newcommand{\bGamma}{\mathbold{\Gamma}}
\newcommand{\transp}{\mathsf{T}}
\newcommand{\Lap}{\mathbold{L}}
\newcommand{\F}{\mathbold{F}}
\newcommand{\M}{\mathbold{M}}
\newcommand{\A}{\mathbold{A}}
\newcommand{\B}{\mathbold{B}}
\newcommand{\C}{\mathbold{C}}
\renewcommand{\H}{\mathbold{H}}

\newcommand{\etal}{et~al.\xspace}
\newcommand{\eg}{e.g.,\/~}
\newcommand{\ie}{i.e.,\/~}

\newcommand{\wrt}{w.r.t.\xspace}                     
\newcommand{\ARMA}[1]{\textrm{ARMA}$_{#1}$\xspace}                    
\newcommand{\FIR}[1]{\textrm{FIR}$_{#1}$\xspace} 
\newcommand{\hsk}{\hskip-0.05cm}                   
 
\renewcommand{\paragraph}[1]{\vspace{2mm}\noindent \textbf{#1}}

\DeclarePairedDelimiter{\abs}{\lvert}{\rvert}

\newcommand{\lmin}{\lambda_{\textit{min}}}
\newcommand{\mumin}{\mu_{\textit{min}}}
\newcommand{\lmax}{\lambda_{\textit{max}}}
\newcommand{\mumax}{\mu_{\textit{max}}}

\newtheoremstyle{slplain}
  {0.7\baselineskip\@plus.2\baselineskip\@minus.2\baselineskip} 
  {0.0\baselineskip\@plus.2\baselineskip\@minus.2\baselineskip}
  {\slshape}												  
  {}														  
  {\itshape}												  
  {.}													      
  { }														  
  {}													      
\theoremstyle{slplain}

\newtheorem{proposition}{Proposition}
\newtheorem{corollary}{Corollary}
\newtheorem{remark}{Remark}

\makesavenoteenv{tabular}

\begin{document}

\title{Distributed Autoregressive Moving Average\\ Graph Filters}

\author{Andreas Loukas*, Andrea Simonetto, and Geert Leus
\thanks{The authors are with the Faculty of EEMCS, Delft University of Technology, 2826 CD Delft, The Netherlands. e-mails: \{a.loukas, a.simonetto, g.j.t.leus\}@tudelft.nl. *Corresponding author: A. Loukas.}}

\maketitle

\begin{abstract}
We introduce the concept of autoregressive moving average (ARMA) filters on a graph and show how they can be implemented in a distributed fashion. Our graph filter design philosophy is independent of the particular graph, meaning that the filter coefficients are derived irrespective of the graph. 
In contrast to finite-impulse response (FIR) graph filters, ARMA graph filters are robust against changes in the signal and/or graph. In addition, when time-varying signals are considered, we prove that the proposed graph filters behave as ARMA filters in the graph domain and, depending on the implementation, as first or higher ARMA filters in the time domain. 
\end{abstract}

\begin{IEEEkeywords}
Signal processing on graphs, graph filters, graph Fourier transform, distributed time-varying computations
\end{IEEEkeywords}

\section{Introduction}

The emerging field of signal processing on graphs~\cite{Sandryhaila2014a, Sandryhaila2013, Sandryhaila2014, Shuman2013} focuses on the extension of classical discrete signal processing techniques to the graph setting. Arguably, the greatest breakthrough of the field has been the extension of the Fourier transform from time signals and images to graph signals, i.e., signals defined on the nodes of irregular graphs. By providing a graph-specific definition of frequency, the graph Fourier transform (GFT) enables us to design filters for graphs:
analogously to classical filters, graph filters process a graph signal by amplifying or attenuating its components at specific graph frequencies.
Graph filters have been used for a number of signal processing tasks, such as denoising~\cite{Zhang2008,Chen2015}, centrality computation~\cite{Page1999}, graph partitioning~\cite{Chung2007}, event-boundary detection~\cite{Loukas2014}, and graph scale-space analysis~\cite{Loukas2015}.

Distributed implementations of filters on graphs only emerged recently as a way of increasing the scalability of computation~\cite{Shuman2011,Sandryhaila2014,Safavi2014}. Nevertheless, being inspired by finite impulse response (FIR) graph filters, these methods are sensitive to graph changes. To solve the graph robustness issue, distributed infinite impulse response (IIR) graph filters have been proposed by Shi~\etal\cite{Shi2015}. 
Compared to FIR graph filters, IIR filters have the potential to achieve better interpolation or extrapolation properties around the known graph frequencies. Moreover, by being designed for a continuous range of frequencies, they can be applied to any graph (even when the actual graph spectrum is unknown).

In a different context, we introduced graph-independent IIR filter design, or what we will label here as {\it universal} IIR filter design (in fact, prior to \cite{Shi2015}) using a potential kernel approach~\cite{Loukas2013,Loukas2014}. In this letter, we will build upon our prior work to develop more general autoregressive moving average (ARMA) graph filters of any order, using parallel or periodic concatenations of the potential kernel. This leads to a {more intuitive} distributed design than the one proposed by Shi~\etal, which is based on gradient-descent type of iterations. Moreover, we show that the proposed ARMA graph filters are suitable to handle \emph{time-varying signals}, an important issue that was not considered previously. Specifically, our design extends {naturally} to time-varying signals leading to 2-dimensional ARMA filters: an ARMA filter in the graph domain of arbitrary order and a first order AR (for the periodic implementation) or a higher order ARMA (for the parallel implementation) filter in the time domain; which opens the way to a deeper understanding of graph signal processing, in general. We conclude the letter by displaying preliminary results suggesting that our ARMA filters not only work for continuously time-varying signals but are also robust to continuously \emph{time-varying graphs}.

\section{Graph Filters}

Consider a graph $G = (V,E)$ of $N$ nodes and let $\x$ be a signal defined on the graph, whose $i$-th component represents the value of the signal at the $i$-th node\footnote{We denote the $i$-th component of a vector $\x$ as $x_i$ starting at index $1$. Node $i$ of a graph is denoted as $u_i$.}.

\vskip-1mm\paragraph{Graph Fourier Transform (GFT).} The GFT transforms a graph signal into the graph frequency domain: the forward and inverse GFTs of $\x$ are $\hat{x}_n = \langle \x , \bphi_n \rangle $ and $x_n =  \langle \hat{\x}, \bphi_n \rangle$, where $\langle \,, \rangle$ denotes the inner product. Vectors $\{\bphi_n\}_{n = 1}^N$ form an orthonormal basis and are commonly chosen as the eigenvectors of a graph Laplacian $\Lap$, such as the discrete Laplacian $\Lap_{\textrm{d}}$ or Chung's normalized Laplacian $\Lap_{\textrm{n}}$.  
For an extensive review of the properties of the GFT, we refer to~\cite{Shuman2013,Sandryhaila2014}.

\emph{To avoid any restrictions on the generality of our approach}, in the following we present our results for a \emph{general basis matrix} $\Lap$. We only require that $\Lap$ is \emph{symmetric} and \emph{1-local}: for all $i \neq j$, $\Lap_{ij} = 0$ whenever $u_i$ and $u_j$ are not neighbors and $L_{ij} = L_{ji}$ otherwise. 

\vskip-1mm\paragraph{Graph filters.} A \emph{graph filter} $\F$ is a linear operator that acts upon a graph signal $\x$ by amplifying or attenuating its graph Fourier coefficients as 
\begin{equation}
	\F \x = \sum\limits_{n = 1}^N h(\lambda_n) \, \hat{x}_n \bphi_n.
\end{equation} 
Let $\lmin$ and $\lmax$ be the minimum and maximum eigenvalues of $\Lap$ over \emph{all} possible graphs. The graph frequency response $h : [\lmin, \, \lmax] \rightarrow \mathbb{C}$ controls how much $\F$ amplifies the signal component of each graph frequency
\begin{align}
	h(\lambda_n) = \langle \F\x, \bphi_n \rangle / \langle \x, \bphi_n \rangle.
\end{align} 

\vskip-1mm\paragraph{Distributed graph filters.} We are interested in how we can filter a signal with a graph filter $\F$ in a \emph{distributed} way, having a user-provided frequency response $h^\ast(\lambda)$. Note that this prescribed $h^\ast(\lambda)$ is a continuous function in the graph frequency $\lambda$ and describes the desired response for {\em any} graph. The corresponding filter coefficients are thus independent of the graph and universally applicable. 

\vskip-1mm\paragraph{\FIR{K} filters.} It is well known that we can \emph{approximate} $\F$ in a distributed way by using a $K$-th order polynomial of $\Lap$. Define \FIR{K} as the $K$-th order approximation given by 
\begin{equation*}
	\F_K = h_0 {\bf I} + \sum_{k=1}^K h_k \Lap^{k}, 
\end{equation*}
where the coefficients $h_i$ are found by minimizing the least-squares objective $\int_{\lambda} | \sum_{k=0}^K h_k \lambda^{k} - h^\ast(\lambda)|^2 \mathrm{d} \lambda$. 
Observe that, in contrast to traditional graph filters, the order of the considered {\it universal} graph filters is not necessarily limited to $N$. By increasing $K$, we can approximate any filter with square integrable frequency response arbitrarily well.

The computation of \FIR{K} is easily performed distributedly. Since $\Lap^K \x = \Lap \left(\Lap^{K-1} \x \right) $, each node $u_i$ can compute the $K$th-term from the values of the $(K-1)$th-term in its neighborhood. The algorithm terminates after $K$ iterations, and, in total, each node exchanges $\Theta(K \deg{u_i})$ bits and stores $\Theta(\deg{u_i} + K)$ bits in its memory. However, \FIR{K} filters exhibit poor performance when the signal or/and graph are time-varying and when there exists asynchronicity among the nodes\footnote{This because, first the distributed averaging is paused after $K$ iterations, and thus the filter output is \emph{not a steady state}; second the input signal is only considered during the first iteration. To track time-varying signals, the computation should be restarted at each time step, increasing the communication and space complexities to $\Theta(K^2\deg{u_i})$ bits and $\Theta(K \deg{u_i} + K^2)$ bits.}. In order to overcome these issues and provide a more solid foundation for graph signal processing, we study \ARMA{} graph filters.

\section{ARMA Graph Filters}

\subsection{Distributed computation}
\label{subsec:filter_computation}

We start by presenting a simple recursion that converges to a filter with a 1st order rational frequency response. We then propose two generalizations with $K$-th order responses\footnote{Note that similar structures were independently developed in \cite{Shi2015}, although based on a different design methodology.}. Using the first, which entails running $K$ 1st order filters in parallel, a node $u_i$ attains fast convergence at the price of exchanging and storing $\Theta(K \deg{u_i})$ bits per iteration\footnote{Any values stored are overwritten during the next iteration.}. By using periodic coefficients, the second algorithm reduces the number of bits exchanged and stored to $\Theta(\deg{u_i})$, at almost equivalent (or even \emph{faster}) convergence time. 

\vskip-1mm\paragraph{\ARMA{1} filters.} We will obtain our first ARMA graph filter as an extension of the potential kernel~\cite{Loukas2013}. Consider the following 1st order recursion:
\begin{align}
	\y_{t+1}   &= \psi \M \y_{t} + \varphi \x \quad \text{and} \quad \y_{0} \text{ arbitrary}, 		
	\label{eq:potentials}
\end{align}
where the coefficients $\varphi, \psi$ are (for now) arbitrary complex numbers, and $\M$ is the translation of $\Lap$ with the minimal spectral radius: $\M = \frac{\lmax - \lmin}{2}{\bf I} - \Lap$.
From Sylvester's matrix theorem, matrices $\M$ and $\Lap$ have the same eigenvectors and the eigenvalues $\mu_n$ of $\M$ differ by a translation to those of $\Lap$: $\mu_n = (\lmax-\lmin)/2 - \lambda_n$.

\begin{proposition}
	The frequency response of \ARMA{1} is $g(\mu) = \frac{r}{\mu - p}, \quad s.t. \quad \abs{p} > \frac{\lmax-\lmin}{2}$, with the residue $r$ and the pole $p$ given by $r = -\varphi / \psi $ and $p = 1/ \psi$, respectively. Recursion~\eqref{eq:potentials} converges to it linearly, irrespective of the initial condition $\y_0$ and matrix $\Lap$.  
	\label{prop:arma1}
\end{proposition}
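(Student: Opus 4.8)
The plan is to exploit the shared eigenbasis of $\M$ and $\Lap$ to decouple recursion~\eqref{eq:potentials} into $N$ independent scalar recursions, one per graph frequency. Since $\{\bphi_n\}_{n=1}^N$ is an orthonormal eigenbasis of both matrices, I would project the recursion onto it by taking inner products with each $\bphi_n$: writing $\hat{y}_{t,n} = \langle \y_t, \bphi_n \rangle$ and $\hat{x}_n = \langle \x, \bphi_n \rangle$, and using $\M \bphi_n = \mu_n \bphi_n$, the update $\y_{t+1} = \psi \M \y_t + \varphi \x$ becomes the scalar recursion $\hat{y}_{t+1,n} = \psi \mu_n \, \hat{y}_{t,n} + \varphi \, \hat{x}_n$ for each $n$. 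This is a first-order linear difference equation whose coefficient $\psi \mu_n$ and forcing term $\varphi \hat{x}_n$ are constant in $t$, so all of the remaining analysis is one-dimensional.

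Next I would identify the fixed point. Setting $\hat{y}_{t+1,n} = \hat{y}_{t,n} =: \hat{y}_n^\ast$ and solving gives $\hat{y}_n^\ast = \varphi \, \hat{x}_n / (1 - \psi \mu_n)$, valid whenever $\psi \mu_n \neq 1$. By the definition of the frequency response, $g(\mu_n) = \hat{y}_n^\ast / \hat{x}_n = \varphi / (1 - \psi \mu_n)$, and the short rearrangement $\varphi/(1-\psi\mu_n) = (-\varphi/\psi)/(\mu_n - 1/\psi)$ puts this in the claimed form $g(\mu) = r/(\mu - p)$ with $r = -\varphi/\psi$ and $p = 1/\psi$. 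Since this holds for every $n$, and the $\mu_n$ sample the continuous range $[\mumin, \mumax]$ as the graph varies, $g$ is the desired graph-independent rational response.

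For convergence I would track the error $e_{t,n} = \hat{y}_{t,n} - \hat{y}_n^\ast$. Subtracting the fixed-point relation from the recursion cancels the forcing term and yields $e_{t+1,n} = \psi \mu_n \, e_{t,n}$, hence $e_{t,n} = (\psi \mu_n)^t e_{0,n}$; this decays geometrically to zero, for any initial $\hat{y}_{0,n}$, exactly when $\abs{\psi \mu_n} < 1$. The crux is to make this bound uniform over all graphs. Because $\M$ is by construction the translation of $\Lap$ with minimal spectral radius, $\max_n \abs{\mu_n} = (\lmax - \lmin)/2$ for every admissible graph, so $\abs{\psi \mu_n} = \abs{\mu_n}/\abs{p} \le (\lmax - \lmin)/(2\abs{p}) < 1$ precisely under the stated constraint $\abs{p} > (\lmax - \lmin)/2$. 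This gives a contraction rate $\abs{\psi}\,(\lmax-\lmin)/2 < 1$ that is independent of $\Lap$ and of $\y_0$. Finally, recombining the per-frequency errors through the orthonormal basis, $\norm{\y_t - \y^\ast} = (\sum_n \abs{e_{t,n}}^2)^{1/2}$ inherits the same geometric decay, which establishes linear convergence. I expect the only real subtlety to be this uniform spectral-radius bound: it is what lets a single pair of coefficients work simultaneously for all graphs, and it explains why the constraint is stated in terms of $\lmin, \lmax$ taken over all graphs rather than the spectrum of any particular $\Lap$.
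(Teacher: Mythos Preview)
Your argument is correct and self-contained: the eigenbasis projection decouples the recursion into $N$ scalar geometric sequences, the fixed point gives the claimed rational response, and the uniform bound $\max_n\abs{\mu_n}\le(\lmax-\lmin)/2$ yields the graph-independent stability constraint $\abs{p}>(\lmax-\lmin)/2$. The paper, by contrast, does not spell any of this out; its entire proof is a one-line reference to Theorem~1 of~\cite{Loukas2013} (the potential-kernel result) with the substitutions $P\mapsto\M$ and $1-\varphi\mapsto\psi$. What you have written is almost certainly the content of that cited theorem, so the underlying mathematics is the same; the difference is only that you prove it from scratch while the authors import it. Your version has the advantage of being readable without chasing the citation and of making explicit where the constraint on $p$ comes from.
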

\begin{proof}
The proof follows from Theorem~1 in~\cite{Loukas2013}, in which we replace $P$ with $\M$ and $1-\varphi$ with $\psi$.
\end{proof}

Recursion~\eqref{eq:potentials} leads to a very efficient distributed implementation: at each iteration $t$, each node $u_i$ updates its value $y_{t,i}$ based on its local signal $x_i$ and a weighted combination of the values $y_{t-1,j}$ of its neighbors $u_j$. Since each node must exchange its value with each of its neighbors, the message/space complexity at each iteration is $\Theta(\deg{u_i})$ bits.

\paragraph{Parallel \ARMA{K} filters.} We can attain a larger variety of responses by simply adding the output of multiple 1st order filters. 
Denote with the superscript $k$ the terms that correspond to the $k$-th \ARMA{1} filter ($k=1,2,\dots,K)$. 

\begin{corollary}
The frequency response of a parallel \ARMA{K} is 
\begin{equation*}
	g(\mu) = \sum_{k = 1}^K \frac{r^{(k)}}{\mu - p^{(k)}} \quad \text{s.t.} \quad |{p^{(k)}}| > \frac{\lmax-\lmin}{2},
\end{equation*}	
with $r^{(k)} = -\varphi^{(k)} / \psi^{(k)} $ and $p^{(k)} = 1/ \psi^{(k)}$, respectively. Recursion~\eqref{eq:potentials} converges to it linearly, irrespective of the initial condition $\y_0$ and matrix $\Lap$.
\label{corollary:arma_parallel}
\end{corollary}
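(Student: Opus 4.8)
The plan is to treat the parallel \ARMA{K} as a superposition of $K$ independent \ARMA{1} filters and to reduce everything to Proposition~\ref{prop:arma1}. First I would make the parallel structure explicit: run $K$ copies of recursion~\eqref{eq:potentials}, the $k$-th one carrying its own coefficients $\varphi^{(k)},\psi^{(k)}$ and producing an output $\y_t^{(k)}$, and define the aggregate output as $\y_t = \sum_{k=1}^K \y_t^{(k)}$. Because every copy is driven by the same matrix $\M$, all $K$ subfilters share the same eigenbasis $\{\bphi_n\}$ and hence act diagonally on the same set of graph frequencies $\{\mu_n\}$.

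The key step is then to exploit linearity. Writing $\F^{(k)}$ for the limiting operator of the $k$-th subfilter, the aggregate operator is $\F = \sum_k \F^{(k)}$, and the frequency-response definition gives, at each $\mu_n$, $\langle \F\x,\bphi_n\rangle/\langle\x,\bphi_n\rangle = \sum_k \langle \F^{(k)}\x,\bphi_n\rangle/\langle\x,\bphi_n\rangle = \sum_k g^{(k)}(\mu_n)$; that is, the response of the sum is the sum of the responses. Invoking Proposition~\ref{prop:arma1} for each $k$ yields $g^{(k)}(\mu)=r^{(k)}/(\mu-p^{(k)})$ with $r^{(k)}=-\varphi^{(k)}/\psi^{(k)}$ and $p^{(k)}=1/\psi^{(k)}$, and summing produces the claimed partial-fraction form. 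The pole constraints $\abs{p^{(k)}}>(\lmax-\lmin)/2$ carry over verbatim from Proposition~\ref{prop:arma1}; since they place each pole strictly outside the disc containing the spectrum of $\M$, every denominator $\mu_n-p^{(k)}$ stays bounded away from zero and $g$ is well defined on the whole frequency band.

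For convergence I would bound the aggregate error by the triangle inequality: $\norm{\y_t-\y_\infty}\le\sum_k\norm{\y_t^{(k)}-\y_\infty^{(k)}}$. Proposition~\ref{prop:arma1} guarantees that each summand decays linearly, independently of $\y_0$ and $\Lap$, so the finite sum decays linearly as well, with asymptotic rate equal to the slowest of the $K$ individual factors (the one whose pole sits closest to the spectral boundary).

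I do not anticipate a genuine obstacle, since this is a corollary and the argument is essentially superposition. The only point that needs a line of justification is the additivity of the frequency responses, and this is clean precisely because all subfilters are driven by the same $\M$ and therefore diagonalize in a common basis; were the subfilters to use different matrices, the responses would not combine frequency-by-frequency and the partial-fraction form would break down.
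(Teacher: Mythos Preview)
Your approach is correct and coincides with the paper's own (sketch) proof: apply Proposition~\ref{prop:arma1} to each of the $K$ parallel \ARMA{1} branches to obtain $\y^{(k)} = \sum_n \frac{r^{(k)}}{\mu_n - p^{(k)}}\,\hat{x}_n\bphi_n$ at steady state, then sum over $k$ and interchange the two sums. The paper does not spell out the convergence or stability arguments you give, but they are implicit in the reduction to Proposition~\ref{prop:arma1}.
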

\begin{proof}\emph{(Sketch)}
From Proposition~\ref{prop:arma1}, at steady state, we have
\begin{equation*}
	\y = \sum_{k = 1}^K \y^{(k)} = \sum_{k = 1}^K \sum_{n = 1}^N \left( \frac{r^{(k)}}{\mu_n - p^{(k)}} \right)  \hat{x}_n \bphi_n,
\end{equation*}
and switching the sum operators the claim follows. \end{proof}

The frequency response of a parallel \ARMA{K} is therefore a rational function with numerator and denominator polynomials of orders $K-1$ and $K$, respectively\footnote{By choosing the coefficients properly, we can generalize the rational function to have any degree smaller than $K$ in the numerator. By adding an extra input, we can also obtain order $K$ in the numerator.}. At each iteration, node $u_i$ exchanges and stores $\Theta(K \deg{u_i})$ bits.

\paragraph{Periodic \ARMA{K} filters.} We can decrease the memory requirements of the parallel implementation by letting the filter coefficients vary in time.
Consider the output of the time-varying recursion
\begin{align}
	\y_{t+1}   &= (\theta_t {\bf I} + \psi_t \M) \y_{t} + \varphi_t \x \quad \text{and} \quad \y_{0} \quad \text{arbitrary},
\label{eq:armaperiodic}
\end{align}
every $K$ iterations, where coefficients $\theta_t, \psi_t, \varphi_t$ are periodic with period $K$:	$\theta_t = \theta_{t-iK}, \psi_t = \psi_{t-iK}, \varphi_t = \varphi_{t-iK}$, with $i$ an integer in $[0, t/K]$ and $\theta_t = 1 - \mathrm{III}_{K}(t)$ being the negated Shah function. 

\begin{proposition}
The frequency response of a periodic \ARMA{K} filter is 
\begin{align*}
	g(\mu) &= \frac{ \sum_{\tau = 0}^{K-1} \prod_{\sigma = K-\tau}^{K-1} \left(\theta_{\sigma} + \psi_{\sigma}\mu\right) \varphi_{K-\tau-1}}{1 - \left(\prod_{\tau = 0}^{K-1} \theta_{\tau} + \psi_{\tau}\mu\right)},
\end{align*}
s.t. the stability constraint $| \prod_{\tau = 0}^{K-1} \theta_{\tau} + \psi_{\tau} \frac{\lmax - \lmin}{2} | <  1$. Recursion~\eqref{eq:armaperiodic} converges to it linearly, irrespective of the initial condition $\y_0$ and matrix $\Lap$.
\label{prop:arma_periodic}
\end{proposition}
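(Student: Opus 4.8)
The plan is to diagonalize the recursion in the graph-Fourier domain and reduce it to a scalar, periodically time-varying difference equation, which I then solve over one full period. Because $\M$ shares the eigenvectors $\{\bphi_n\}$ of $\Lap$ with eigenvalues $\mu_n$, projecting \eqref{eq:armaperiodic} onto $\bphi_n$ decouples the $N$-dimensional system into $N$ independent scalar recursions $\hat{y}_{t+1,n} = (\theta_t + \psi_t \mu_n)\,\hat{y}_{t,n} + \varphi_t\,\hat{x}_n$, where $\hat{y}_{t,n} = \langle \y_t, \bphi_n \rangle$. It therefore suffices to analyze a single scalar recursion $\hat{y}_{t+1} = a_t \hat{y}_t + \varphi_t \hat{x}$ with $a_t = \theta_t + \psi_t \mu$, where I suppress the index $n$ and write $\mu$ for $\mu_n$.

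Next I would unroll this scalar recursion over one period of length $K$. Using the periodicity $a_{t} = a_{t-iK}$ and $\varphi_t = \varphi_{t-iK}$, a straightforward induction gives $\hat{y}_{K} = A\,\hat{y}_{0} + B\,\hat{x}$, with the period map $A = \prod_{\tau=0}^{K-1} a_\tau$ and the accumulated input gain $B = \sum_{\tau=0}^{K-1}\bigl(\prod_{\sigma=\tau+1}^{K-1} a_\sigma\bigr)\varphi_\tau$, the empty product being $1$. Since the coefficients repeat with period $K$, the same relation holds between successive period-sampled states: $\hat{y}_{(i+1)K} = A\,\hat{y}_{iK} + B\,\hat{x}$.

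This exhibits the period-sampled subsequence $\{\hat{y}_{iK}\}$ as a first-order geometric recursion in $i$. Whenever $\abs{A} < 1$ it converges linearly, independently of $\hat{y}_0$ whose contribution decays as $A^i$, to the fixed point $\hat{y}^\ast = B\,\hat{x}/(1-A)$; hence the frequency response observed every $K$ iterations is $g(\mu) = \hat{y}^\ast/\hat{x} = B/(1-A)$. Substituting $A$ and $B$ and re-indexing the numerator sum by $\tau \mapsto K-1-\tau$ turns $B$ into $\sum_{\tau=0}^{K-1}\bigl(\prod_{\sigma=K-\tau}^{K-1}(\theta_\sigma + \psi_\sigma \mu)\bigr)\varphi_{K-\tau-1}$, matching the stated expression, while $1-A$ gives the stated denominator.

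It remains to justify that the convergence is uniform in the graph, which I regard as the main point requiring care. The per-eigenvalue condition is $\abs{A(\mu_n)} < 1$, and this must hold simultaneously for every $\mu_n$ in the spectrum of $\M$. The only graph-dependent quantities are the $\mu_n$, but they obey the universal bound $\abs{\mu_n} \le (\lmax - \lmin)/2$ since $\M$ has minimal spectral radius $(\lmax-\lmin)/2$. I would therefore argue that enforcing the stated boundary constraint $\abs{\prod_{\tau=0}^{K-1}(\theta_\tau + \psi_\tau (\lmax-\lmin)/2)} < 1$ controls the worst case, so that $\abs{A(\mu)} < 1$ holds throughout $[-(\lmax-\lmin)/2,\,(\lmax-\lmin)/2]$ regardless of the particular $\Lap$. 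The delicate part is verifying that the extremal value of the polynomial modulus $\abs{A(\mu)}$ over this interval is indeed attained at, or bounded by its value at, the spectral edge, which is what makes the single boundary inequality sufficient for guaranteeing convergence on every graph.
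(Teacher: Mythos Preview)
Your proof is correct and follows essentially the same route as the paper: both unroll the periodic recursion over one full period to obtain a time-invariant first-order system and read off the steady-state response, the only cosmetic difference being that you project onto the eigenbasis first and work with scalars whereas the paper keeps matrix notation and diagonalizes at the end. Your hesitation about whether the single boundary inequality controls $\abs{A(\mu)}$ over the whole spectral interval is well-placed---the paper's proof simply invokes the spectral-radius condition $\abs{\lmax(\A)}<1$ without arguing that the stated single-point constraint implies it.
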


\begin{proof}
Define matrices $\bGamma_t = \theta_t {\bf I} + \psi_t \M$ and $\bPhi_{t_1,t_2} = \bGamma_{t_1} \bGamma_{t_1-1} \cdots \bGamma_{t_2}$ if $t_1 \geq t_2$ and $\bPhi_{t_1,t_2} = {\bf I}$ otherwise. 
The output at the end of each period can be re-written as a time-invariant system 
\begin{align}
	\y_{(i+1)K} = \A \y_{iK} + \B \x,
	\label{eq:recursion_coarse}
\end{align}
with $\A = \bPhi_{K-1,0}$, 
$ \B = \sum_{\tau = 0}^{K-1} \bPhi_{K-1,K-\tau} \varphi_{K-\tau-1}$.
Assuming that $\A$ is non-singular, both $\A$ and $\B$ have the same eigenvectors $\bphi_n$ as $\M$ (and $\Lap$). As such, when $\abs*{\lmax(\A)} < 1$, the steady state of~\eqref{eq:recursion_coarse} is 
\begin{equation*}
 \y = (I-\A)^{-1} \B \x = \sum_{n=1}^N \frac{\lambda_n(\B)}{1 - \lambda_n(\A)} \hat{x}_n \bphi_n.
\end{equation*}
To derive the exact response, notice that 
\begin{align*}
	\lambda_n(\bPhi_{t_1,t_2}) = \prod_{\tau = t_1}^{t_2} \lambda_n(\bGamma_t) =  \prod_{\tau = t_1}^{t_2} \left(\theta_{\tau} + \psi_{\tau}\mu_n \right),
\end{align*}
which, by the definition of $\A$ and $\B$, yields the desired frequency response. 
The linear convergence rate follows from the linear convergence of~\eqref{eq:recursion_coarse} to $\y$ with rate $\gamma = |\lmax(\A)|$.
\end{proof}

\vspace{2mm} By some algebraic manipulation, we can see that the frequency responses of periodic and parallel \ARMA{K} filters are equivalent at steady state. 
In the periodic version, each node $u_i$ stores $\Theta(\deg(u_i))$ bits, as compared to $\Theta(K \deg{u}_i)$ bits in the parallel one. The low-memory requirements of the periodic \ARMA{K} render it suitable for resource constrained devices. 

\begin{remark}
Since the designed \ARMA{K} filters are attained for any initial condition and matrix $\Lap$, the filters are also robust to slow time-variations in the signal and graph. We will generalize this result to arbitrary time-varying signals in Section~\ref{sec:tv}.  
\end{remark}

\subsection{Filter design}
\label{subsec:filter_design}

Given a graph frequency response $g^\ast:$ $[\mumin,\, \mumax ]$ $\rightarrow \mathbb{C}$ and a filter order $K$, our objective is to find the complex polynomials $p_b(\mu)$ and $p_a(\mu)$ of order $K-1$ and $K$, respectively, that minimize
\begin{align*}
	\int_{\mu} \hsk\Big|{\frac{p_b(\mu)}{p_a(\mu)} \hsk-\hsk g^\ast(\mu)}\Big|^2\hsk\hsk\mathrm{d}\mu \hsk=\hsk \int_{{\mu}}\hsk\Big|\frac{ \sum_{k=0}^{K-1} b_k \mu^k }{1 \hsk+\hsk \sum_{k=1}^{K} a_k \mu^k} \hsk-\hsk g^\ast(\mu)\Big|^2\hsk\hsk\mathrm{d}\mu,
\end{align*}
while ensuring that the chosen coefficients result in a stable system (see constraints in Corollary~\ref{corollary:arma_parallel} and Proposition~\ref{prop:arma_periodic}).

\begin{remark} Whereas $g^\ast$ is a function of $\mu$, the desired frequency response $h^\ast: [\lmin, \, \lmax] \rightarrow \mathbb{C}$ is often a function of $\lambda$. We attain $g^\ast(\mu)$ by simply mapping the user-provided response to the domain of $\mu$: $g^\ast(\mu) = h^\ast((\lmax-\lmin)/2-\lambda)$. 
\end{remark}

\begin{remark} Even if we constrain ourselves to pass-band filters and we consider only the set of $\Lap$ for which $(\lmax\hsk-\hsk\lmin)/2 \hsk=\hsk1$, it is impossible to design our coefficients based on classical design methods developed for IIR  filters (\eg Butterworth, Chebyshev). The stability constraint of \ARMA{K} is different from classical filter design, where the poles of the transfer function must lie within (not outside) the unit circle.
\end{remark}

\begin{figure}
\includegraphics[width=1\columnwidth]{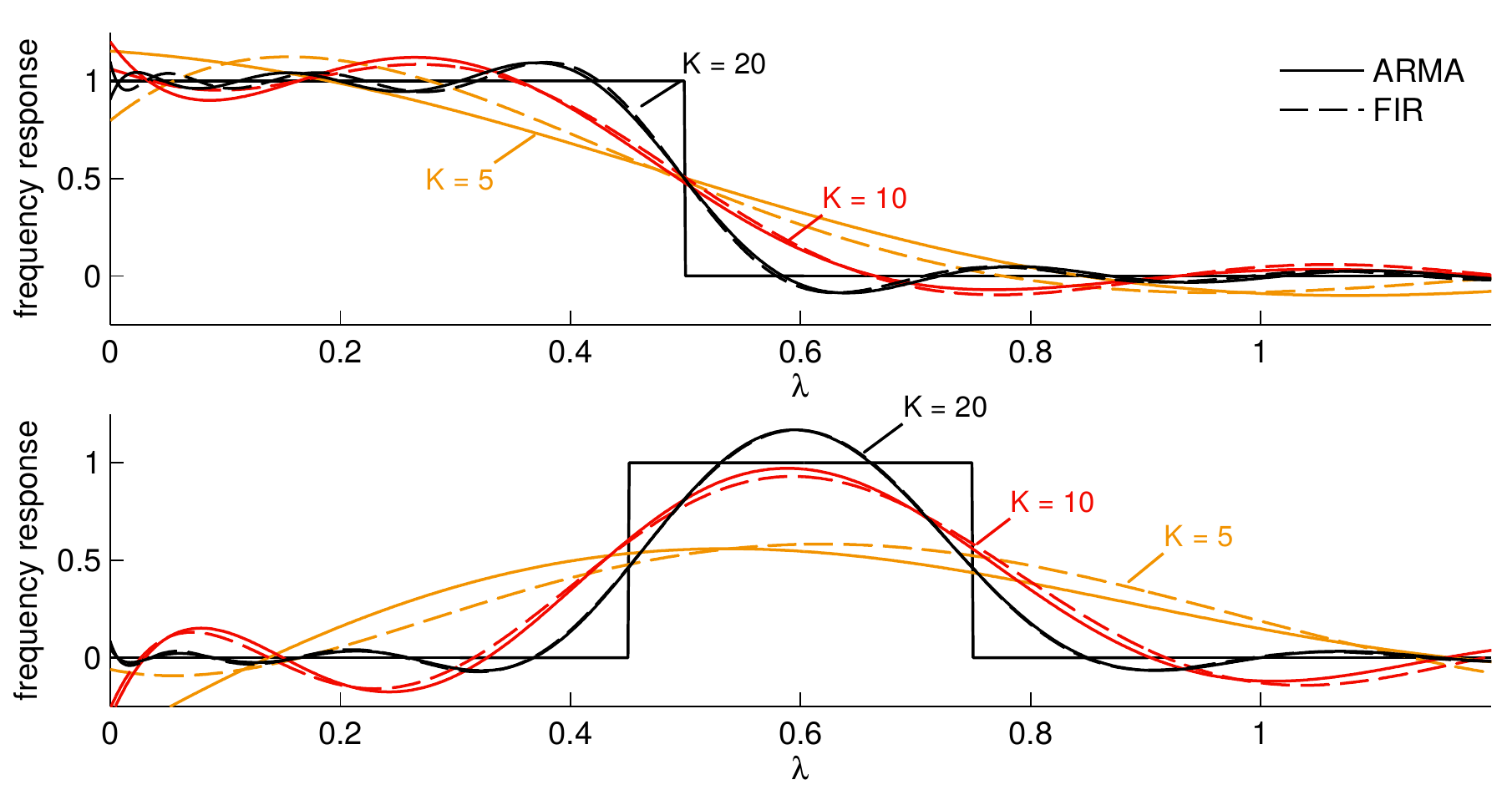}
\vskip-1mm
\caption{The frequency response of \ARMA{K} filters designed by Shank's method and the FIR responses of corresponding order. Here, $h^\ast$ is a step function (top) and a window function (bottom). }
\label{fig:responses}
\vskip-1mm
\end{figure}

\begin{figure}[t]
\centering
\includegraphics[width=1\columnwidth]{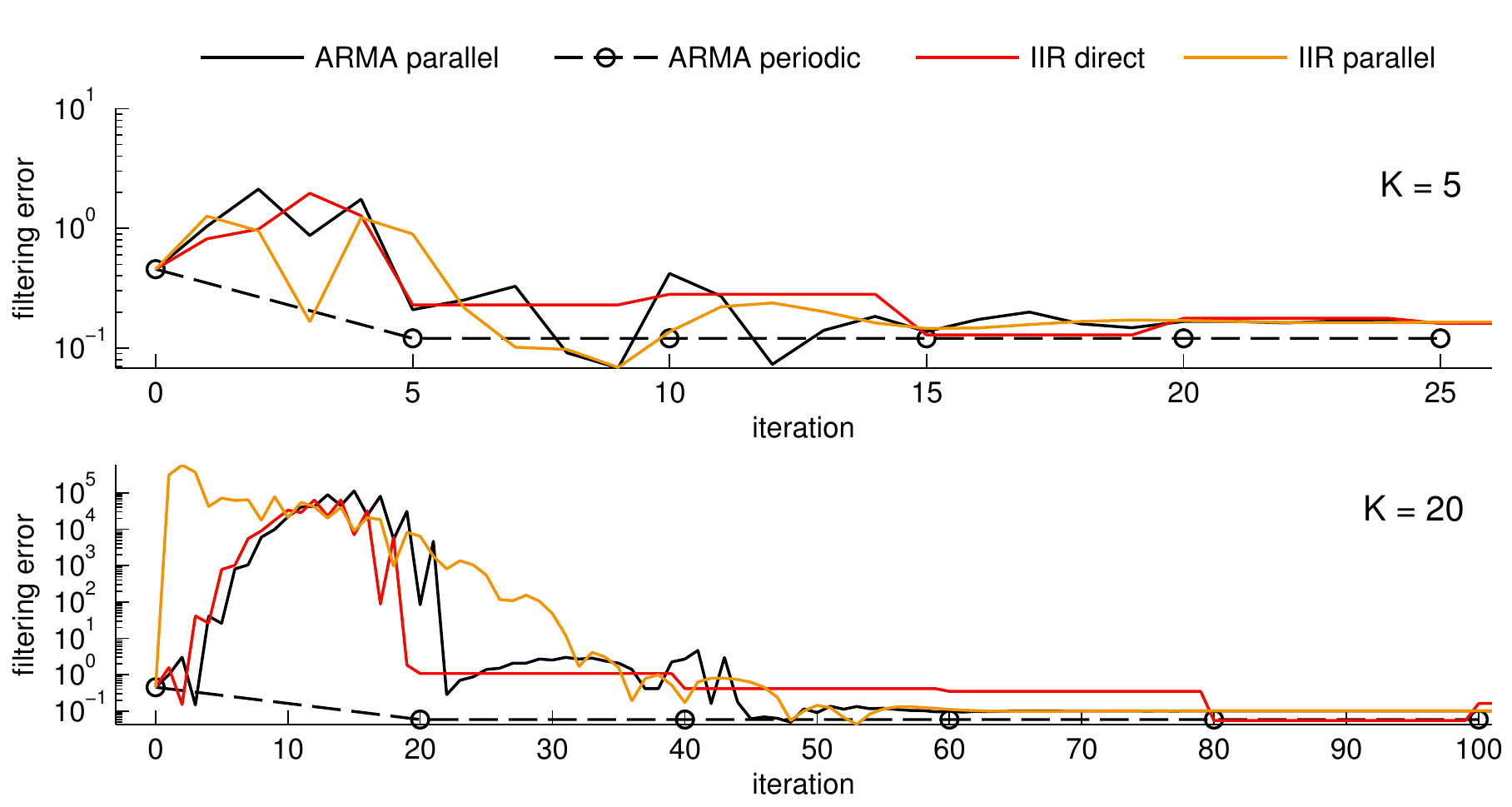}
\vskip-1mm
\caption{Convergence comparison of \ARMA{} filters w.r.t. the IIR filters of~\cite{Shi2015}. The filtering error is $\|\y_t-\y^*\|_2/\|\y^*\|_2$, where $\y^*$ is the desired output.}
\label{fig:comparison}
\vskip-1mm
\end{figure}

\vskip-1mm\paragraph{Design method.} Similar to Shank's method~\cite{Shanks1967}, we approximate the filter coefficients in two steps:

1) We determine $\{a_k\}_{k=1}^K$, by finding a $\hat{K} > K$ order polynomial approximation $\hat{g}(\mu) = \sum_{k=0}^{\hat{K}} g_k \mu^k$ of $g^\ast(\mu)$ using polynomial regression, and solving the coefficient-wise system of equations $p_a(\mu) \hat{g}(\mu) = p_b(\mu)$. 

2) We determine $\{b_k\}_{k=1}^{K-1}$ by solving the constrained least-squares problem of minimizing $\int_{{\mu}}|p_b(\mu)/p_a(\mu) - g^\ast(\mu)|^2 \mathrm{d}\mu$, \wrt $p_b(\mu)$ and s.t. the stability constraints.

\vspace{1mm} Figure~\ref{fig:responses} illustrates in solid lines the frequency responses of three \ARMA{K} filters ($K=5,10,20$), designed to approximate a step function (top) and a window function (bottom). In the first step of our design, we computed the FIR filter $\hat{g}$ as a Chebyshev approximation of $g^\ast$ of order $\hat{K} = K+1$. ARMA responses closely approximate the optimal FIR responses for the corresponding orders (dashed lines).

Figure~\ref{fig:comparison} compares the convergence of our recursions w.r.t. the IIR design of~\cite{Shi2015} in the same low-pass setting of Figure~\ref{fig:responses} (top), running in a network of $n=100$ nodes\footnote{We do not consider the cascade from of~\cite{Shi2015} since every module in the cascade requires many iterations, leading to a slower implementation.}. We see how our periodic implementation (only valid at the end of each period) obtains faster convergence. The error of other filters increases significantly at the beginning for $K=20$, due to the filter coefficients, which are very large.

\section{Time variations}
\label{sec:tv}

We now focus on \ARMA{K} graph filters and study their behavior when the signal is changing in time, thereby showing how our design extends naturally to the analysis of time-varying signals. We start by \ARMA{1} filters: indicate with $\x_t$ the graph signal at time $t$. We can re-write the \ARMA{1} recursion as 
\begin{equation}\label{eq:lti2}
	\y_{t+1} = \psi \M \y_{t} + \varphi \x_t. 
\end{equation}
The graph signal $\x_t$ can still be decomposed into its graph Fourier coefficients, only now they will be time-varying, \ie we will have $\hat{x}_{n,t}$. Under the stability condition $\|\psi \M\| < 1$, for each of these coefficients we can write its respective graph frequency \emph{and} standard frequency transfer function as
\begin{equation}
H(z,\mu) = \frac{\varphi}{z - \psi \mu}.
\end{equation}
The transfer functions ${H}(z,\mu)$ characterize completely the behavior of \ARMA{1} graph filters for an arbitrary yet time-invariant graph: when $z \to 1$, we obtain back the constant $\x$ result of Proposition~\ref{prop:arma1}, while for all the other $z$ we obtain the standard frequency response as well as the graph frequency one. As one can see, 1st order filters are universal \ARMA{1} in the graph domain (they do not depend on the particular choice of $\Lap$) as well as 1st order AR filters in the time domain. This result generalizes to parallel and periodic \ARMA{K} filters. 

\vskip-0mm\paragraph{Parallel \ARMA{K}.} Similarly to Corollary~\ref{corollary:arma_parallel}, we have: 

\begin{proposition}
Under the same stability conditions of Corollary~\ref{corollary:arma_parallel}, the transfer function $H(z,\mu)$ from the input ${\x}_{t}$ to the output $\y_t$ of a parallel \ARMA{K} implementation is 
\begin{equation*}
	H(z, \mu) = \sum_{k = 1}^K \frac{\varphi^{(k)}}{z - \psi^{(k)}\mu}.
\end{equation*}	
\label{prop.tvpar}
\end{proposition}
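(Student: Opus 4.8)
The plan is to lean on linearity: a parallel \ARMA{K} is by construction the superposition of $K$ first-order branches driven by the common input $\x_t$, so its transfer function should be the sum of the individual \ARMA{1} transfer functions $\varphi^{(k)}/(z-\psi^{(k)}\mu)$ already derived for the time-varying case. First I would write $\y_t = \sum_{k=1}^K \y_t^{(k)}$, where the $k$-th branch obeys $\y_{t+1}^{(k)} = \psi^{(k)} \M \y_t^{(k)} + \varphi^{(k)} \x_t$ with its own coefficients. Taking the Z-transform in $t$ of each branch and discarding the initial-condition transients (legitimate under the stated stability conditions, which make these transients decay) gives, in the $z$-domain, $(z\mathbf{I} - \psi^{(k)}\M)\,\y^{(k)}(z) = \varphi^{(k)} \x(z)$, where I denote the transforms by the same symbols.

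Next, since $\M$ is symmetric and $1$-local it admits the orthonormal eigenbasis $\{\bphi_n\}$ with eigenvalues $\mu_n$, shared by every branch. Projecting each transformed recursion onto $\bphi_n$ decouples the graph frequencies, yielding $(z - \psi^{(k)}\mu_n)\,\hat{y}_n^{(k)}(z) = \varphi^{(k)}\,\hat{x}_n(z)$, and hence $\hat{y}_n^{(k)}(z) = \frac{\varphi^{(k)}}{z-\psi^{(k)}\mu_n}\,\hat{x}_n(z)$, which is exactly the \ARMA{1} transfer function evaluated at $\mu=\mu_n$. Summing the branches at a fixed frequency then gives $\hat{y}_n(z) = \big(\sum_{k=1}^K \frac{\varphi^{(k)}}{z-\psi^{(k)}\mu_n}\big)\hat{x}_n(z)$, and reading off the ratio $\hat{y}_n(z)/\hat{x}_n(z)$ as a function of the continuous variable $\mu$ delivers the claimed $H(z,\mu)$.

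The algebra here is routine; the step needing care is the justification that each branch, and hence their sum, is a well-defined stable system. Concretely I would verify that the poles $z=\psi^{(k)}\mu_n$ lie strictly inside the unit disk for every eigenvalue $\mu_n$: since $\M$ is the translation of minimal spectral radius, $|\mu_n|\le(\lmax-\lmin)/2$, and the constraint $|p^{(k)}|=|1/\psi^{(k)}|>(\lmax-\lmin)/2$ inherited from Corollary~\ref{corollary:arma_parallel} gives $|\psi^{(k)}\mu_n|<1$. This both legitimizes dropping the transients and guarantees the transforms exist on $|z|=1$, so the superposition in the $z$-domain is valid. The main obstacle is therefore essentially bookkeeping---ensuring these per-branch bounds hold simultaneously for all $n$---rather than any genuine difficulty, since once the graph frequencies decouple the result follows immediately by linearity.
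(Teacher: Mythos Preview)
Your proposal is correct and follows essentially the same route as the paper: write each branch as $\y_{t+1}^{(k)} = \psi^{(k)}\M \y_t^{(k)} + \varphi^{(k)}\x_t$, pass to the $z$-domain, diagonalize via the graph Fourier basis, and sum the $K$ contributions by linearity. The only cosmetic difference is that the paper packages the $K$ branches into a single block-diagonal state-space system and invokes the transfer matrix $\C(z\mathbf{I}-\A)^{-1}\B$ before applying the GFT, whereas you take the Z-transform branch-by-branch and then project; the underlying computation and the use of the stability condition are identical.
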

\begin{proof}
The recursion~\eqref{eq:potentials} for the parallel implementation reads 
\begin{equation}
\y_{t+1}^{(k)} = \psi^{(k)}\M \y_t^{(k)} + \varphi^{(k)} \x_t, \quad k=1,\dots,K
\end{equation}
while the output is $\y_{t} = \sum_{k=1}^K \y_t^{(k)}$. This can be written in a compact form as 
\begin{equation}
\w_{t+1} = \A \w_t + \B \x_t, \quad \y_{t} = \C \w_t,
\end{equation}
where $\w_t$ is the stacked version of all the $\y^{(k)}_t$, while 
$$
\A = \mathrm{blkdiag}[\psi^{(1)}\M, \dots, \psi^{(K)}\M], \,\B = [\varphi^{(1)} {\bf I}, \dots, \varphi^{(K)} {\bf I}]^\transp, 
$$
and $\C = \bf{1}^\transp \otimes {\bf I}$. Under the same stability conditions of Corollary~\ref{corollary:arma_parallel}, the transfer matrix between $\x_t$ and $\y_t$ is 
$$
\H(z) = \C ({z \bf{I} - \A})^{-1} \B = \sum_{k = 1}^K {\varphi^{(k)}}({z {\bf I} - \psi^{(k)}\M })^{-1},
$$
where we have used the block diagonal structure of $\A$. By applying the Graph Fourier transform, the claim follows.
\end{proof}

Proposition~\ref{prop.tvpar} characterizes the parallel implementation completely: our filters are universal \ARMA{K} in the graph domain as well as in the time domain.

\vskip-1mm\paragraph{Periodic \ARMA{K}.} Time-varying signals in the periodic implementation will be analyzed assuming that we keep the input $\x_t$ \emph{fixed} during the whole period $K$. 

\begin{figure}[t]
\centering
\includegraphics[width=1\columnwidth]{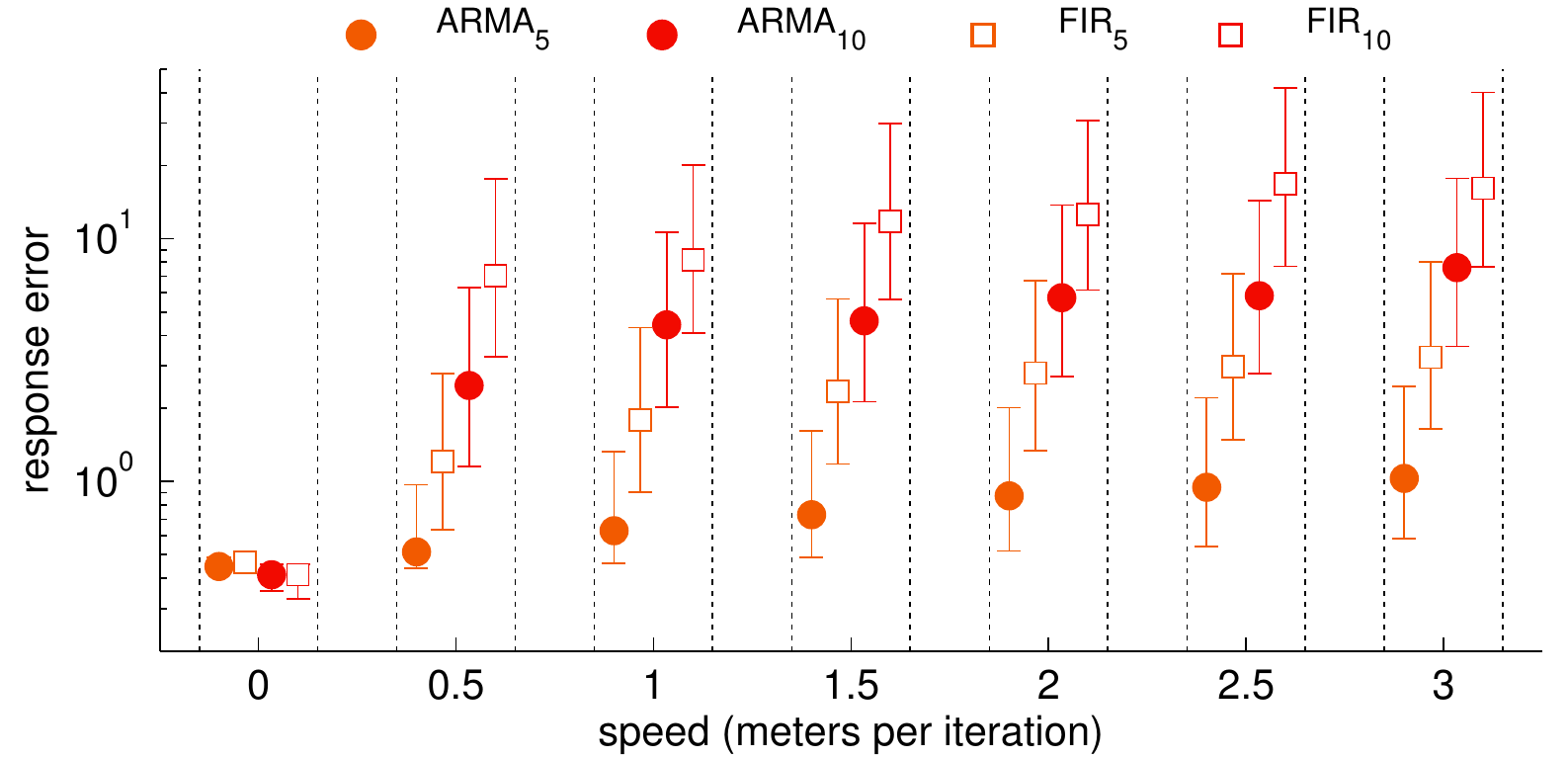}
\vskip-1mm
\caption{The effect of node mobility inducing a time-varying signal and graph. Each error bar depicts the standard deviation of the filtering error over ten runs. The response error is $\|g(\mu) - g^*(\mu)\|_2/\|g^*(\mu)\|_2$. A small horizontal offset was included to improve visibility.}
\label{fig:mobility}
\vskip-1mm
\end{figure}

\begin{proposition}
Let $\x_{iK}$ be a sampled version of the input signal $\x_t$, sampled at the beginning of each period. Under the same stability conditions of Proposition~\ref{prop:arma_periodic}, the transfer function for periodic \ARMA{K} filters from $\x_{iK}$ to $\y_{iK}$ is
\begin{equation}
H_K(z,\mu) = \frac{ \sum_{\tau = 0}^{K-1} \prod_{\sigma = K-\tau}^{K-1} \left(\theta_{\sigma} + \psi_{\sigma}\mu\right) \varphi_{K-\tau-1}}{z - \left(\prod_{\tau = 0}^{K-1} \theta_{\tau} + \psi_{\tau}\mu\right)}.
\end{equation}
\end{proposition}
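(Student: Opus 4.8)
The plan is to reduce the time-varying periodic recursion to a coarse-scale linear time-invariant (LTI) system and then read off its transfer function by diagonalization, mirroring the derivation of Proposition~\ref{prop:arma_periodic} but tracking a $z$-transform rather than a steady state.

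First I would invoke the standing assumption that $\x_t$ is held fixed throughout each period, so that $\x_t = \x_{iK}$ for all $t \in \{iK,\dots,(i+1)K-1\}$. Under this hypothesis, the per-period unrolling already carried out in the proof of Proposition~\ref{prop:arma_periodic} applies verbatim, and I obtain
\begin{equation*}
\y_{(i+1)K} = \A \y_{iK} + \B \x_{iK},
\end{equation*}
with the very same $\A = \bPhi_{K-1,0}$ and $\B = \sum_{\tau=0}^{K-1}\bPhi_{K-1,K-\tau}\varphi_{K-\tau-1}$. The only change from the constant-input setting is that the sampled value $\x_{iK}$ now depends on $i$; the matrices $\A,\B$ are unaffected precisely because the coefficients $\theta_t,\psi_t,\varphi_t$ are $K$-periodic.

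Next I would treat this as an LTI system in the coarse index $i$, with state and measured output both equal to $\y_{iK}$ (so the output matrix is the identity and there is no feedthrough). Taking the $z$-transform of $\y_{(i+1)K}=\A\y_{iK}+\B\x_{iK}$ and solving for the input--output map gives the transfer matrix $\H(z) = (z{\bf I}-\A)^{-1}\B$, valid when $\abs*{\lmax(\A)}<1$, which is exactly the stability condition of Proposition~\ref{prop:arma_periodic}. Since $\A$ and $\B$ are both polynomials in $\M$, they are simultaneously diagonalized by the eigenvectors $\bphi_n$ and commute, so applying the GFT decouples $\H(z)$ into the scalar responses $H_K(z,\mu_n)=\lambda_n(\B)/(z-\lambda_n(\A))$, one per graph frequency.

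Finally I would substitute the eigenvalue formulas already established in Proposition~\ref{prop:arma_periodic}, namely $\lambda_n(\A)=\prod_{\tau=0}^{K-1}(\theta_\tau+\psi_\tau\mu_n)$ and $\lambda_n(\B)=\sum_{\tau=0}^{K-1}\prod_{\sigma=K-\tau}^{K-1}(\theta_\sigma+\psi_\sigma\mu_n)\varphi_{K-\tau-1}$, to recover the claimed $H_K(z,\mu)$. I do not expect a genuine obstacle: the heavy lifting, namely the eigenvalue computation of $\A$ and $\B$, is inherited from Proposition~\ref{prop:arma_periodic}, and what remains is routine LTI bookkeeping. The single point demanding care is the indexing convention for the $z$-transform---confirming that, because the measured output coincides with the state, the numerator is exactly $\B$ rather than $z\B$ or $\A^{-1}\B$. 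It is also worth remarking that the resulting response has a single pole in $z$, so the periodic implementation behaves as a first-order AR filter in the time domain, in contrast to the order-$K$ time behavior of the parallel case in Proposition~\ref{prop.tvpar}.
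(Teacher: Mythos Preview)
Your proposal is correct and matches the paper's own (sketch) proof: substitute $\x_{iK}$ for $\x$ in the coarse recursion~\eqref{eq:recursion_coarse}, take the $z$-transform to obtain $(z{\bf I}-\A)^{-1}\B$, and diagonalize using the eigenvalue expressions already derived in Proposition~\ref{prop:arma_periodic}. The additional care you take with the LTI bookkeeping and the remark on first-order AR time behavior are consistent with the paper's discussion.
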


\begin{proof}\emph{(Sketch)} One writes the recursion~\eqref{eq:recursion_coarse} substituting $\x$ with $\x_{Kt}$, and proceeds as in the proof of Proposition~\ref{prop:arma_periodic}. 
\end{proof}

As in the parallel case, this proposition describes completely the behavior of the periodic implementation. In particular, our filters are \ARMA{K} filters in the graph domain whereas 1st order AR filters in the time domain. 

\vskip2mm

The design of ${H}(z,\mu)$ and $H_K(z,\mu)$ to accommodate both \ARMA{K} requirements and bandwidth for time-varying signals is left for future research.

\vskip-0mm\paragraph{Time-varying graphs.} We conclude the letter with a preliminary result showcasing the robustness of our filter design to continuously time-varying signals \emph{and} graphs. Under the same setting of Figure~\ref{fig:responses}, we consider $\x_t$ to be the node degree, while moving the nodes by a random waypoint model~\cite{BonnMotion2010} for a duration of $600$ seconds. In this way, by defining the graph as a disk graph, the graph and the signal are changing. In Figure~\ref{fig:mobility}, we depict the response error after $100$ iterations (\ie at convergence), in different \emph{mobility} settings: the speed is defined in meters per iteration and the nodes live in a box of $1000\times1000$ meters with a communication range of $180$ meters. As we observe, our designs can tolerate better time-variations. Future research will focus on characterizing and exploiting this property from the design perspective. 

\bibliographystyle{IEEEtran}
\bibliography{bibliography}

\end{document}